\documentclass[11pt]{article}
\usepackage[latin1]{inputenc}
\usepackage{graphicx}
\usepackage{epsfig}
\usepackage{color}
\usepackage{listings}
\usepackage{vmargin}
\usepackage{multicol}
\usepackage{lscape}
\usepackage{moreverb}
\usepackage{verbatim}
\usepackage{amsthm, amssymb}
\usepackage{amsmath}
\usepackage{hhline}
\newtheorem{Lem}{Lemma}
\newtheorem{theorem}{Theorem}

\setpapersize{A4}
\setmarginsrb{1in}{1in}{1in}{1in}{0pt}{0mm}{0pt}{0mm}

\title{Shortest Paths in Planar Graphs with Real Lengths in $O(n\log^2n/\log\log n)$ Time}
\author{Shay Mozes
        \footnote{Department of Computer Science,
                  Brown University,
                  \texttt{shay@cs.brown.edu},
                  \texttt{http://www.cs.brown.edu/people/shay/}} \and
        Christian Wulff-Nilsen
        \footnote{Department of Computer Science,
                  University of Copenhagen,
                  \texttt{koolooz@diku.dk},
                  \texttt{http://www.diku.dk/hjemmesider/ansatte/koolooz/}}}

\date{}
\begin{document}
\setlength\footskip{36pt}

\maketitle
\begin{abstract}
Given an $n$-vertex planar directed graph with real edge lengths and with no negative cycles, we show
how to compute single-source shortest path distances in the graph in $O(n\log^2n/\log\log n)$ time with
$O(n)$ space. This is an improvement of a recent time bound of $O(n\log^2n)$ by Klein et al.
\end{abstract}

\section{Introduction}
Computing shortest paths in graphs is one of the most fundamental
problems in combinatorial optimization with a rich history. Classical shortest path algorithms are the Bellman-Ford algorithm
and Dijkstra's algorithm which both find distances from a given source vertex to all other vertices in
the graph. The Bellman-Ford algorithm works for general graphs and has running time $O(mn)$ where $m$ resp.\ $n$
is the number of edges resp.\ vertices of the graph. Dijkstra's algorithm runs in $O(m + n\log n)$ time
when implemented with Fibonacci heaps but it only works for graphs with non-negative edge lengths.

We are interested in the single-source shortest path (SSSP) problem for planar directed graphs. There is an optimal $O(n)$
time algorithm for this problem when all edge lengths are non-negative~\cite{SSSPPlanar}. For planar graphs with
arbitrary real edge lengths and with no negative cycles, Lipton, Rose, and Tarjan~\cite{LiptonRoseTarjan} gave an
$O(n^{3/2})$ time algorithm. Henzinger, Klein, Rao, and Subramanian~\cite{SSSPPlanar} obtained a (not strongly) polynomial
bound of $\tilde{O}(n^{4/3})$. Later, Fakcharoenphol and Rao~\cite{Fakcharoenphol} showed how to
solve the problem in $O(n\log^3n)$ time and $O(n\log n)$ space. Recently, Klein, Mozes, and Weimann~\cite{SSSPPlanarNeg}
presented a linear space $O(n\log^2n)$ time algorithm.

In this paper, we improve the result in~\cite{SSSPPlanarNeg} by exhibiting a linear space algorithm with
$O(n\log^2n/\log\log n)$ running time.

From the observations in~\cite{SSSPPlanarNeg}, our algorithm can be used to solve bipartite planar perfect matching,
feasible flow, and feasible circulation in planar graphs in $O(n\log^2n/\log\log n)$ time.

The organization of the paper is as follows. In Section~\ref{sec:DefsRes},
we give some definitions and some basic results, most of them related to planar graphs. Our algorithm is very similar to
that of Klein et al. so in Section~\ref{sec:KleinAlgo}, we give an overview of some of their ideas. We then show how to
improve the running time in Section~\ref{sec:ImprovedAlgo}. Finally, we make some concluding remarks in
Section~\ref{sec:ConclRem}.

\section{Definitions and Basic Results}\label{sec:DefsRes}
In the following, $G = (V,E)$ denotes an $n$-vertex planar directed graph with real edge lengths and with no
negative cycles. For vertices $u,v\in V$, we let $d_G(u,v)\in\mathbb R\cup\{\infty\}$ denote the length of a shortest path
in $G$ from $u$ to $v$. We extend this notation to subgraphs of $G$. We will assume that $G$ is triangulated such that there
is a path of finite length between each ordered pair of vertices of $G$. The new edges added have sufficiently large lengths
so that shortest path distances in $G$ will not be affected.

Given a graph $H$, we let $V_H$ and $E_H$ denote its vertex set and edge set, respectively.
For an edge $e\in E_H$, we let $l(e)$ denote the length of $e$ (we omit $H$ in the definition but this should not cause
any confusion). Let $P = u_1,\ldots,u_m$ be a path in $H$. We let $|P| = m$. For
$1\leq i\leq j\leq m$, $P[u_i,u_j]$ denotes the subpath $u_i,\ldots,u_j$. A path $P'$ is said to \emph{intersect} $P$ if
$V_P\cap V_{P'}\neq\emptyset$. If $P' = u_m,\ldots,u_{m'}$ is another path, we define
$PP' = u_1,\ldots,u_{m-1},u_m,u_{m+1},\ldots,u_{m'}$.

Define a \emph{region} $R$ (of $G$) to be the subgraph of $G$ induced by a subset of $V$. In $G$, the vertices of
$V_R$ that are adjacent to vertices in $V\setminus V_R$
are called \emph{boundary vertices} (of $R$) and the set of boundary vertices of $R$ is called the \emph{boundary} of $R$.
Vertices of $V_R$ that are not boundary vertices of $R$ are called \emph{interior vertices} (of $R$).

The cycle separator theorem of Miller~\cite{CycleSep} states that, given an $m$-vertex plane
graph, there is a Jordan curve $C$ intersecting $O(\sqrt m)$ vertices and no edges such that
between $m/3$ and $2m/3$ vertices are enclosed by $C$. Furthermore, this Jordan curve can be found in linear time.

Let $r\in(0,n)$ be a parameter. Fakcharoenphol and Rao~\cite{Fakcharoenphol} showed how to recursively apply the
cycle separator theorem such that in $O(n\log n)$ time, (a plane embedding of) $G$ is divided into $O(n/r)$ regions with some
nice properties:
\begin{enumerate}
\item each region contains at most $r$ vertices and $O(\sqrt r)$ boundary vertices,
\item no two regions share interior vertices,
\item each region has a boundary contained in $O(1)$ faces, defined by simple cycles.
\end{enumerate}
We refer to such a division as an \emph{$r$-division} of $G$. The bounded faces of a region are its
\emph{holes}. To simplify the description of our algorithm, we will refer to all vertices of the cycles containing the
boundary of a region as boundary vertices of that region. Furthermore, we will assume that for each region $R$ in an
$r$-division, $R$ is contained in the bounded region defined by one of the cycles $C$ in the boundary of $R$.
Clearly, this can always be achieved by adding a new cycle if needed. We refer to $C$ as the \emph{external face} of $R$.

For a graph $H$, a \emph{price function} is a function $p:V_H\rightarrow\mathbb R$. The
\emph{reduced cost function} induced by $p$ is the function $w_p:E_H\rightarrow\mathbb R$, defined by
\[
  w_p(u,v) = p(u) + l(u,v) - p(v).
\]
We say that $p$ is a \emph{feasible} price function for $H$ if for all $e\in E_H$, $w_p(e)\geq 0$.

It is well known that reduced cost functions preserve shortest paths, meaning that we can find shortest paths
in $H$ by finding shortest paths in $H$ with edge lengths defined by the reduced cost function $w_p$. Furthermore,
given $\phi$ and the distance in $H$ w.r.t.\ $w_p$ from a $u\in V_H$ to a $v\in V_H$, we can extract the original distance
in $H$ from $u$ to $v$ in constant time~\cite{SSSPPlanarNeg}.

Observe that if $p$ is feasible, Dijkstra's algorithm can be applied to find shortest path distances since then
$w_p(e)\geq 0$ for all $e\in E_H$. An example of a feasible price function is $u\mapsto d_H(s,u)$ for any $s\in V_H$.
This assumes that $d_H(s,u) < \infty$ for all $u\in V_H$ which can always be achieved by, say, triangulating
$H$ with edges of sufficiently large length so that shortest paths in
$H$ will not be affected.

A matrix $M=(M_{ij})$ is \emph{totally monotone} if for every $i,i',j,j'$ such
that $i<i'$, $j<j'$ and $M_{ij} \leq M_{ij'}$, we also have $M_{i'j} \leq M_{i'j'}$.
Totally monotone matrices were introduced by Aggarwal et
al. in~\cite{SMAWK}, who gave an algorithm, nicknamed SMAWK, that, given
a totally monotone $n \times m$ matrix $M$,
finds all column minima of $M$ in just $O(n+m)$ time.
A matrix $M=(M_{ij})$ is \emph{convex Monge}  if for every $i,i',j,j'$ such
that $i<i'$, $j<j'$, we have $M_{ij} + M_{i'j'} \geq M_{ij'} +
M_{i'j}$.
It is immediate that if $M$ is convex Monge then it is totally monotone.
Thus SMAWK can be used to find the column minima of a convex
Monge matrix.
\cite{SSSPPlanarNeg} used a generalization of SMAWK to 
so called falling staircase matrices, due to Klawe and Kleitman~\cite{MinimaMatrix},
that finds all column minima in $O(m\alpha(n) + n)$ time, where $\alpha(n)$ is the inverse Ackerman function.

\section{The Algorithm of Klein et al.}\label{sec:KleinAlgo}
In this section, we give an overview of the algorithm of~\cite{SSSPPlanarNeg} before describing our improved
algorithm in Section~\ref{sec:ImprovedAlgo}.

Let $s$ be a vertex of $G$. To find SSSP distances in $G$ with source $s$, the algorithm starts by applying the
cycle separator theorem to $G$. This gives a Jordan curve $C$ which separates $G$ into
two subgraphs, $G_0$ and $G_1$, and there are $O(\sqrt n)$ boundary vertices on $C$.

Let $r$ be any of these boundary vertices. The algorithm consists of five stages:
\paragraph{Recursive call:} SSSP distances in $G_i$ with source $r$ are computed recursively for $i = 0,1$.
\paragraph{Intra-part boundary distances:} The distances in $G_i$ between each pair of boundary vertices of $G_i$ are
computed using the algorithm of~\cite{MultiSrc} for $i = 0,1$. This stage takes $O(n\log n)$ time.
\paragraph{Single-source inter-part boundary distances:} A variant of Bellman-Ford is used to compute SSSP distances
in $G$ from $r$ to all boundary vertices on $C$. The algorithm consists of $O(\sqrt n)$ iterations and each
iteration runs in $O(\sqrt n\alpha(n))$ time using the algorithm of Klawe and Kleitman~\cite{MinimaMatrix}. This stage therefore runs in
$O(n\alpha(n))$ time.
\paragraph{Single-source inter-part distances:} Distances in the previous stage are used to modify $G$ such that
all edge lengths are non-negative without changing the shortest paths. Dijkstra's algorithm is then used in
the modified graph to obtain SSSP distances in $G$ with source $r$. Total running time for this stage is $O(n\log n)$.
\paragraph{Rerooting single-source distances:} A price function is obtained from the computed distances from $r$ in $G$.
This price function is feasible for $G$ and Dijkstra's algorithm is applied to obtain SSSP distances
in $G$ with source $s$ in $O(n\log n)$ time.

\section{Improved Algorithm}\label{sec:ImprovedAlgo}
As can be seen above, the last four stages of the algorithm in~\cite{SSSPPlanarNeg} run in a total of $O(n\log n)$ time.
Since there are $O(\log n)$ recursion levels, the total running time is $O(n\log^2n)$. We now describe how to improve
this time bound.

The idea is to reduce the number of recursion levels by applying the cycle separator theorem of Miller not once but
several times at each node of the recursion tree. More precisely, for a suitable $p$, we obtain an $n/p$-division of $G$ in
$O(n\log n)$ time.
For each region $R_i$ in this $n/p$-division, we pick an arbitrary boundary vertex $r_i$ and recursively compute SSSP
distances in $R_i$ with source $r_i$. This is similar to stage one in the original algorithm except that we recurse on
$O(p)$ regions instead of just two.

We will show how all these recursively computed distances can be used to compute SSSP distances in $G$ with source $s$
in $O(n\log n + np\alpha(n))$ additional time. This bound is no better than the $O(n\log n)$ bound of the original
algorithm but the speed-up comes from the reduced number of recursion levels. Since the size of regions
is reduced by a factor of at least $p$ for each recursion level, the depth of the recursion tree is only $O(\log n/\log p)$.
It follows that the total running time of our algorithm is
\[
  O\left(\frac{\log n}{\log p}(n\log n + np\alpha(n))\right).
\]
To minimize this expression, we set $n\log n = np\alpha(n)$. Solving this, we get $p = \log n/\alpha(n)$
which gives a running time of $O(n\log^2n/\log\log n)$, as requested.

What remains is to show how to compute SSSP distances in $G$ with source $s$ in $O(n\log n + np\alpha(n)) = O(n\log n)$
time, excluding the time for recursive calls.

So assume that we are given an $n/p$-division of $G$ and that for each region $R$, we are given SSSP distances in $R$ with
some boundary vertex of $R$ as source. Note that the number of regions is $O(p)$ and each region contains at most $n/p$
vertices and $O(\sqrt{n/p})$ boundary vertices.

We will assume in the following that no region has holes. Then all its boundary vertices are cyclically
ordered on its external face. We consider holes in Section~\ref{subsec:Holes}.

The remaining part of the algorithm consists of four stages very similar to those in the algorithm of Klein et al. We give an
overview of them here and describe them in greater detail in the subsections below. Each stage takes $O(n\log n)$ time.
\paragraph{Intra-region boundary distances:} For each region $R$, distances in $R$ between each pair of boundary
vertices of $R$ are computed.
\paragraph{Single-source inter-region boundary distances:} Distances in $G$ from an arbitrary boundary vertex $r$ of
an arbitrary region to all boundary vertices of all regions are computed.
\paragraph{Single-source inter-region distances:} Using the distances obtained in the previous stage to obtain
a modified graph, distances in $G$ from $r$ to all vertices of $G$ are computed using Dijkstra's algorithm on the
modified graph.
\paragraph{Rerooting single-source distances:} Identical to the final stage of the original algorithm.

\subsection{Intra-region Boundary Distances}\label{subsec:IntraregBoundaryDists}
Let $R$ be a region. Since $R$ has no holes, we can apply the multiple-source shortest path algorithm of~\cite{MultiSrc} to
$R$ since we have a feasible price function from the recursively computed distances in $R$. Total time for this is
$O(|V_R|\log|V_R|)$ time which is $O(n\log n)$ over all regions.

\subsection{Single-source Inter-region Boundary Distances}\label{subsec:SSIntregBoundaryDists}
Let $r$ be some boundary vertex of some region. We need to find distances in $G$ from $r$ to all boundary vertices of
all regions. To do this, we use a variant of Bellman-Ford similar to that in stage three of the original algorithm.

Let $\mathcal R$ be the set of $O(p)$ regions, let $B\subseteq V$ be the set of boundary vertices over all regions, and let
$b = |B| = O(p\sqrt{n/p}) = O(\sqrt{np})$. Note that a vertex in $B$ may belong to several regions.

Pseudocode of the algorithm is shown in Figure~\ref{fig:Pseudocode}. Notice the similarity with the algorithm
in~\cite{SSSPPlanarNeg} but also an important difference: in~\cite{SSSPPlanarNeg}, each table entry $e_j[v]$ is updated only
once. Here, it may be updated several times in iteration $j$ since more than one region may have $v$ as a boundary vertex. For
$j\geq 1$, the final value of $e_j[v]$ will be
\begin{align}
e_j[v] & = \min_{w\in B_v}\{e_{j-1}[w] + d_R(w,v)\},\label{eqn}
\end{align}
where $B_v$ is the set of boundary vertices of regions having $v$ as boundary vertex.
\begin{figure}
\begin{tabbing}
d\=dd\=\quad\=\quad\=\quad\=\quad\=\quad\=\quad\=\quad\=\quad\=\quad\=\quad\=\quad\=\kill
\>1. \>initialize vector $e_j[v]$ for $j = 0,\ldots, b$ and $v\in B$\\
\>2. \>$e_j[v] := \infty$ for all $v\in B$ and $j = 0,\ldots, b$\\
\>3. \>$e_0[r] := 0$\\
\>4. \> \textbf{for} $j = 1,\ldots,b$\\
\>5. \>\> \textbf{for} each region $R\in\mathcal R$\\
\>6. \>\>\> let $C$ be the cycle defining the boundary of $R$\\
\>7. \>\>\> $e_j[v] := \min\{e_j[v], \min_{w\in V_C}\{e_{j-1}[w] + d_R(w,v)\}\}$ for all $v\in V_C$\\
\>8. \>$D[v] := e_b[v]$ for all $v\in B$
\end{tabbing}
\caption{Pseudocode for single-source inter-region boundary distances algorithm.}\label{fig:Pseudocode}
\end{figure}

To show the correctness of the algorithm, we need the following two lemmas.
\begin{Lem}\label{Lem:BellmanFord1}
Let $P$ be a simple $r$-to-$v$ shortest path in $G$ where $v\in B$. Then $P$ can be decomposed into at
most $b$ subpaths $P = P_1P_2P_3\ldots$, where the endpoints of each subpath $P_i$ are boundary vertices
and $P_i$ is a shortest path in some region of $\mathcal R$.
\end{Lem}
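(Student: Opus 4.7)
The plan is to decompose $P$ greedily by tracking which region each edge belongs to. First I would observe that, because regions in the $n/p$-division cover all of $G$ and interior vertices of a region $R$ have no neighbours outside $R$, every edge of $G$ lies in at least one region, and whenever an edge $(u,w)$ has $u$ or $w$ as an interior vertex of some region $R$, the edge must lie in $R$ (which is the unique region containing $u$ resp.\ $w$ as an interior vertex).

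Using this, I would assign each edge of $P$ to a region that contains it (breaking ties arbitrarily), and then split $P$ at every vertex where the assigned region changes. By the observation above, such a change can only occur at a vertex that is a boundary vertex of both the region just exited and the region just entered. The endpoints $r$ and $v$ are themselves boundary vertices by hypothesis, so every endpoint of every subpath in the resulting decomposition $P=P_1P_2\cdots P_k$ is a boundary vertex, and each $P_i$ lies entirely in a single region $R_i\in\mathcal R$.

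Next I would argue that each $P_i$ is actually a shortest path in $R_i$ between its endpoints. If some $P_i$ from $u$ to $u'$ were not a shortest $u$-to-$u'$ path in $R_i$, replacing it with a shorter $u$-to-$u'$ path in $R_i$ (which is also a walk in $G$) would give a walk from $r$ to $v$ in $G$ of length strictly less than $l(P)$. Since $G$ has no negative cycles, any such walk contains a simple $r$-to-$v$ path of no greater length, contradicting the fact that $P$ is a shortest $r$-to-$v$ path in $G$.

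Finally I would bound $k$ by counting boundary vertices. Since $P$ is simple, the $k+1$ endpoints of $P_1,\ldots,P_k$ are pairwise distinct boundary vertices in $B$, giving $k+1\le b$ and in particular $k\le b$, as claimed. The only point that takes any real thought is the second step—justifying that breakpoints must fall on boundary vertices—and this reduces cleanly to the defining properties of the $n/p$-division recorded in Section~\ref{sec:DefsRes}; the rest is a standard shortest-path replacement argument.
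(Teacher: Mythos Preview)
Your proof is correct and takes essentially the same approach as the paper's three-sentence proof: break $P$ where it transitions between regions (which can only happen at boundary vertices), and use simplicity of $P$ together with $|B|=b$ to bound the number of pieces. You simply flesh out details the paper leaves implicit---the explicit edge-to-region assignment, the verification that breakpoints are boundary vertices, and the replacement argument showing each $P_i$ is shortest in its region.
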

\begin{proof}
$P$ is simple so it can use a boundary vertex at most once. There are $b$ boundary vertices in total. A path
can only enter and leave a region through boundary vertices of that region.
\end{proof}
\begin{Lem}\label{Lem:BellmanFord2}
After iteration $j$ of the algorithm in Figure~\ref{fig:Pseudocode}, $e_j[v]$ is the length of a
shortest path in $G$ from $r$ to $v$ that can be decomposed into at most $j$ subpaths $P = P_1P_2P_3\ldots P_j$,
where the endpoints of each subpath $P_i$ are boundary vertices and $P_i$ is a shortest path in a region of $\mathcal R$.
\end{Lem}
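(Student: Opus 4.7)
The proof will proceed by induction on $j$, in the spirit of the correctness proof of Bellman--Ford. The base case $j=0$ is immediate: line 3 sets $e_0[r]=0$ (matching the empty decomposition of the trivial $r$-to-$r$ path), while $e_0[v]=\infty$ for $v\ne r$, and indeed no decomposition into $0$ subpaths can reach any other vertex.

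Before the inductive step, I would record a simple monotonicity observation: for every $v\in B$ and every $j\ge 1$, $e_j[v]\le e_{j-1}[v]$. This holds because $v$ is itself a boundary vertex of some region $R\in\mathcal R$, so taking $w=v$ in the inner minimum of line 7 contributes $e_{j-1}[v]+d_R(v,v)=e_{j-1}[v]$ (using that $G$, and hence $R$, has no negative cycles, so $d_R(v,v)=0$). This monotonicity is what lets us handle decompositions with strictly fewer than $j$ pieces.

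For the inductive step, assume the claim for $j-1$ and use the characterization of the final value in equation~(\ref{eqn}):
\[
  e_j[v] \;=\; \min_{R\ni v,\; w\in V_C}\bigl\{e_{j-1}[w]+d_R(w,v)\bigr\}.
\]
Upper bound (there is a length-$e_j[v]$ path with a valid decomposition of size $\le j$): pick the minimizing $R$ and $w$; by the inductive hypothesis $e_{j-1}[w]$ is realized by a path $Q$ from $r$ to $w$ in $G$ decomposable into at most $j-1$ region-shortest subpaths with boundary-vertex endpoints. Concatenating $Q$ with a shortest $w$-to-$v$ path in $R$ (a single additional piece with boundary endpoints) gives a valid decomposition of size at most $j$ and length exactly $e_j[v]$.

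Lower bound (no shorter such path exists): let $\pi$ be a shortest $r$-to-$v$ path in $G$ admitting a valid decomposition into $k\le j$ subpaths. If $k<j$, then by the inductive hypothesis $e_{k}[v]\le|\pi|$, and by the monotonicity observation $e_j[v]\le e_k[v]\le|\pi|$. If $k=j$, write $\pi=\pi' P_j$ where $P_j$ is a shortest path in some region $R$ from a boundary vertex $w$ to $v$. Then $\pi'$ is a valid $r$-to-$w$ decomposition of size $j-1$, so by the inductive hypothesis $e_{j-1}[w]\le|\pi'|$, and therefore
\[
  e_j[v] \;\le\; e_{j-1}[w]+d_R(w,v) \;\le\; |\pi'|+|P_j| \;=\; |\pi|.
\]

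The main subtlety, and the only real obstacle, is the sub-case $k<j$ in the lower bound: the recurrence in~(\ref{eqn}) naturally produces paths with exactly $j$ pieces, so without monotonicity one could not conclude that $e_j[v]$ is at most the length of a shorter decomposition. Monotonicity in turn relies on $v$ being a boundary vertex of some region in $\mathcal R$ (which is the standing assumption on $B$) and on the absence of negative cycles; once this is in place, the induction goes through cleanly.
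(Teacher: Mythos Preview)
Your proof is correct and follows essentially the same inductive Bellman--Ford argument as the paper's (appendix) proof. The only cosmetic difference is that you make the ``at most $j$'' case explicit via your monotonicity observation $e_j[v]\le e_{j-1}[v]$ (using $d_R(v,v)=0$), whereas the paper implicitly allows padding a shorter decomposition with trivial one-vertex subpaths; these are the same device.
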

\begin{proof}
The proof is by induction on $j$. We omit it since it is similar to that in~\cite{SSSPPlanarNeg}.
\end{proof}
It follows from~(\ref{eqn}) and Lemmas~\ref{Lem:BellmanFord1} and~\ref{Lem:BellmanFord2} that after $b$ iterations, $D[v]$ holds
the distance in $G$ from $r$ to $v$ for all $v\in B$. This shows the correctness of our algorithm.

Line $7$ can be executed in $O(|V_C|\alpha(|V_C|))$ time using ideas
from~\cite{SSSPPlanarNeg} (since all boundary vertices of $R$ are cyclically ordered on its external face) and the fact that
$d_R(w,v)$ has been precomputed in the previous stage for all $v,w\in V_C$.
Thus, each iteration of lines $4$--$7$ takes $O(b\alpha(n))$ time, giving a total running
time for this stage of $O(b^2\alpha(n)) = O(np\alpha(n))$. Recalling that $p = \log n/\alpha(n)$, this
bound is $O(n\log n)$, as requested.

\subsection{Single-source Inter-region Distances}
To compute distances in $G$ from boundary vertex $r$ to all vertices of $G$ we consider one region at a time. So let
$R$ be a region. We need to compute distances in $G$ from $r$ to each vertex of $R$.

Let $R'$ be the graph obtained from $R$ by adding a new vertex $r'$ and an edge from $r'$ to each boundary vertex of $R$; the length
of this edge is equal to the distance in $G$ from $r$ to the boundary vertex. Note that $d_G(r,v) = d_{R'}(r',v)$ for all
$v\in V_R$. Also note that $R'$ has $O(|V_R|)$ vertices and edges and can be computed in
$O(|V_R|)$ time, given the distances computed in the previous stage. We need to find distances in $R'$ from
$r'$ to each vertex of $V_R$.

Let $r_R$ be the boundary vertex of $R$ for which distances in $R$ from $r_R$ to all vertices of $R$ have been
recursively computed. We define a price function $\phi$ for $R'$ as follows. Let $B_R$ be the set of boundary vertices
of $R$ and let $D = \max\{d_R(r_R,b) - d_G(r,b) | b\in B_R\}$. Then for all $v\in V_{R'}$,
\[
  \phi(v) = \left\{\begin{array}{ll} d_R(r_R,v) & \mbox{ if } v\neq r' \\
                                     D          & \mbox{ if } v = r'.
                   \end{array}\right.
\]
\begin{Lem}\label{Lem:Price}
Function $\phi$ defined above is a feasible price function for $R'$.
\end{Lem}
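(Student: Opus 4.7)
The plan is to verify directly that $w_\phi(e) \geq 0$ for every edge $e \in E_{R'}$, splitting into the two types of edges present in $R'$: the original edges of $R$, and the new edges emanating from $r'$ to the boundary vertices of $R$.

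For an edge $(u,v) \in E_R$, both $\phi(u)$ and $\phi(v)$ are given by the recursively computed distances $d_R(r_R, \cdot)$. Thus
\[
  w_\phi(u,v) = d_R(r_R, u) + l(u,v) - d_R(r_R, v),
\]
and this is nonnegative by the triangle inequality for shortest paths in $R$, which applies because the triangulation assumption from Section~\ref{sec:DefsRes} guarantees that $d_R(r_R, v)$ is finite for every $v \in V_R$.

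For an added edge $(r', b)$ with $b \in B_R$, recall that $l(r', b) = d_G(r, b)$ by construction of $R'$, $\phi(r') = D$, and $\phi(b) = d_R(r_R, b)$. Hence
\[
  w_\phi(r', b) = D + d_G(r,b) - d_R(r_R, b),
\]
which is nonnegative precisely because $D$ was defined as $\max_{b' \in B_R}\{d_R(r_R, b') - d_G(r, b')\}$, so $D \geq d_R(r_R, b) - d_G(r, b)$ for every $b \in B_R$.

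There is no real obstacle here; the proof is a two-case verification. The only subtlety worth flagging is that $r'$ has no incoming edges in $R'$, so we never need to bound $\phi(u) + l(u, r') - \phi(r')$, and that $d_R(r_R, v)$ is well-defined for all $v \in V_R$ thanks to the triangulation convention used throughout the paper. Combining the two cases establishes that $\phi$ is feasible for $R'$.
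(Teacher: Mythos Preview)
Your proof is correct and follows essentially the same approach as the paper: a two-case split on whether the tail of the edge is $r'$, using the triangle inequality for edges of $R$ and the definition of $D$ for the added edges $(r',b)$. The paper's proof is slightly terser but identical in substance, including the observation that no edge enters $r'$.
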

\begin{proof}
Let $e = (u,v)$ be an edge of $R'$. By construction, no edges end in $r'$ so $v\neq r'$. If $u\neq r'$ then
$\phi(u) + l(e) - \phi(v) = d_R(r_R,u) + l(u,v) - d_R(r_R,v) \geq 0$ by the triangle inequality so
assume that $u = r'$. Then $v\in B_R$ so
$\phi(u) + l(e) - \phi(v) = D + d_G(r,v) - d_R(r_R,v)\geq 0$ by definition of $D$. This shows the lemma.
\end{proof}
Price function $\phi$ can be computed in time linear in the size of $R$ and
Lemma~\ref{Lem:Price} implies that Dijkstra's algorithm can be applied to compute distances in $R'$
from $r'$ to all vertices of
$V_R$ in $O(|V_R|\log|V_R|)$ time. Over all regions, this is $O(n\log n)$, as requested.

We omit the description of the last stage where single-source distances are rerooted to source $s$ since it is identical
to the last stage of the original algorithm. We have shown that all stages run in $O(n\log n)$ time and it follows that
the total running time of our algorithm is $O(n\log^2n/\log\log n)$. It remains to deal with holes in regions.

\subsection{Dealing with Holes}\label{subsec:Holes}
In Sections~\ref{subsec:IntraregBoundaryDists} and~\ref{subsec:SSIntregBoundaryDists}, we needed the assumption
that no region has holes. In this section, we remove this restriction.
As mentioned in Section~\ref{sec:DefsRes}, we may assume w.l.o.g.\ that each region of $\mathcal R$ has at most a
constant $h$ number of holes.

\paragraph{Intra-region boundary distances:}
Let us first show how to compute intra-region boundary distances when regions have holes. The reason why it works
in $O(n\log n)$ time in Section~\ref{subsec:IntraregBoundaryDists} is that all boundary vertices of each region are on the
external face, allowing us to apply the multiple-source shortest path algorithm of~\cite{MultiSrc}.

Now, consider a region $R$. If we apply~\cite{MultiSrc} to $R$ we get distances from boundary vertices on the external face
of $R$ to all boundary vertices of $R$. This is not enough. We also need distances from boundary vertices belonging
to the holes of $R$.

So consider one of the holes of $R$. We can transform $R$ in linear time such that this hole becomes the external face of $R$.
Having done this transformation, we can apply the algorithm of~\cite{MultiSrc} to get distances from boundary vertices of
this hole to all boundary vertices of $R$. If we repeat this for all holes, we get distances in $R$ between all pairs of
boundary vertices of $R$ in time $O(|V_R|\log |V_R| + h|V_R|\log |V_R|) = O(|V_R|\log |V_R|)$ time. Thus, the time bound in
Section~\ref{subsec:IntraregBoundaryDists} still holds when regions have holes.

\paragraph{Single-source inter-region boundary distances:}
What remains is the problem of computing single-source inter-region boundary distances when regions have holes. Let $C$
be the external face of region $R$. Let $H_R$ be the directed graph having the boundary vertices of $R$ as vertices and
having an edge $(u,v)$ of length $d_R(u,v)$ between each pair of vertices $u$ and $v$.

Line 7 in Figure~\ref{fig:Pseudocode} relaxes all edges in $H_R$ having both endpoints on $C$.
We need to relax all edges of $H_R$. In the following, when we say that we relax edges of $R$, we really refer to the
edges of $H_R$.

To relax the edges of $R$, we consider each pair of cycles $(C_1,C_2)$, where $C_1$ and $C_2$ are $C$ or a hole, and
we relax all edges starting in $C_1$ and ending in $C_2$. This will cover all edges we need to relax.

Since the number of choices of $(C_1,C_2)$ is $O(h^2) = O(1)$, it suffices to show that in a single iteration, the time
to relax all edges starting in $C_1$ and ending in $C_2$ is
$O((|V_{C_1}| + |V_{C_2}|)\alpha(|V_{C_1}| + |V_{C_2}|))$, with $O(|V_R|\log |V_R|)$ preprocessing time.

We may assume that $C_1 \neq C_2$, for otherwise we can relax edges as described in
Section~\ref{subsec:SSIntregBoundaryDists}.

In the following, we define graphs, obtained from $R$, that are needed in our algorithm. It is assumed that these graphs
are constructed in a preprocessing step. Later, we bound the time to construct them.

We transform $R$ in such a way that $C_1$ is the external face of $R$ and $C_2$ is a hole of $R$.
We may assume that there is a shortest path in $R$ between every ordered pair of vertices, say, by adding a pair of oppositely
directed edges between each consecutive pair of vertices of $C_i$ in some simple walk of $C_i$, $i = 1,2$ (if an edge already
exists, a new edge is not added). The lengths of the new edges are chosen sufficiently large so that shortest paths in $R$ and
their lengths will not change. Where appropriate, we will regard $R$ as some fixed planar embedding of that region.

Before proceeding, let us give some additional definitions. 
We say that an edge $e = (u,v)$ with exactly one endpoint on path $P$
\emph{emanates right (left) of $P$} if (a) $e$ is directed away from $P$, and
(b) $e$ is to the right (left) of $P$ in the direction of $P$.
If $e$ is directed towards $P$, then we say that $e$ \emph{enters $P$ from the right (left)}
if $(v,u)$ emanates right (left) of $P$.
We extend these definitions to paths and say, e.g., that a  path $Q$
emanates right of path $P$ if there is an edge of $Q$
that emanates right of $P$.

Now, let $r_1\in V_{C_1}$ and let $T$ be a shortest path tree in $R$ from $r_1$ to all vertices of $C_2$. Let $P$ be a simple path in
$T$ from
$r_1$ to some leaf $r_2\in V_{C_2}$. Define $\overleftarrow{E}$ resp.\ $\overrightarrow{E}$ as the set of edges that
either emanate left resp.\ right of $P$ or enter $P$ from the left resp.\ right.

Now, take a copy $R_P$ of $R$ and remove $P$ and all edges incident to $P$ in $R_P$. Add two copies, $\overleftarrow{P}$ and
$\overrightarrow{P}$, of $P$ to $R_P$. Connect path $\overleftarrow{P}$ resp.\ $\overrightarrow{P}$ to the rest of $R_P$ by attaching
the edges of $\overleftarrow{E}$ resp.\ $\overrightarrow{E}$ to the path, see Figure~\ref{fig:OneCut}. If $(u,v)\in E_R$, where
$(v,u)\in E_P$, we add $(u,v)$ to $\overleftarrow P$ and $\overrightarrow P$ in $R_P$. We extend this construction to paths from
$C_1$ to $C_2$ other than $P$.

In order to relax edges from boundary vertices of $C_1$ to boundary vertices of $C_2$ in $R$, the first step is to relax edges
in $R_P$, defined in the following.
\begin{figure}
\centerline{\scalebox{0.6}{\input{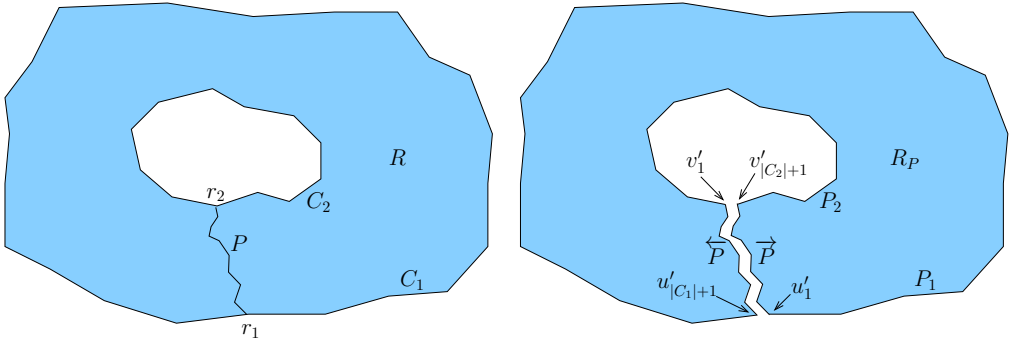}}}
\caption{Region $R_P$ is obtained from $R$ essentially by cutting open at $P$ the ``ring'' bounded by $C_1$ and $C_2$.}
\label{fig:OneCut}
\end{figure}

Note that a simple, say counter-clockwise, walk $(u_1 = r_1),u_2,\ldots,u_{|C_1|},(u_{|C_1|+1} = r_1)$ of $C_1$ in $R$ starting
and ending in $r_1$ corresponds to a simple path $P_1 = u_1',\ldots,u_{|C_1|+1}'$ in $R_P$. In the following, we identify
$u_i$ with $u_i'$ for $i = 2,\ldots,|C_1|$. Vertex $u_1 = r_1$ in $R$ corresponds to two vertices in $R_P$, namely $u_1'$ and
$u_{|C_1|+1}$. We will identify both of these vertices with $r_1$.

A simple, say clockwise, walk of $C_2$ in $R$ from $r_2$ to $r_2$ corresponds to a simple path
$P_2 = v_1',\ldots,v_{|C_2|+1}'$ in $R_P$. We make a similar identification between vertices of $C_2$ and $P_2$.

In the following, when we e.g.\ say that we relax all edges in $R_P$ starting in vertices of $C_1$ and ending in
vertices of $C_2$, we really refer to edges starting in the corresponding vertices of $P_1$ and ending in the corresponding
vertices of $P_2$. More precisely, suppose we are in iteration $j$. Then relaxing an edge from a $u\in V_{C_1}\setminus\{r_1\}$
to a $v\in V_{C_2}\setminus\{r_2\}$ in $R_P$ means updating
\[
  e_j[v] := \min\{e_j[v], e_{j-1}[u] + d_{R_P}(u',v')\}.
\]
If $u = r_1$, we relax w.r.t.\ both $u_1'$ and $u_{|C_1|+1}'$ and if $v = r_2$, we relax w.r.t.\ both $v_1'$ and $v_{|C_2|+1}'$.
We extend these definitions to graphs with a structure similar to $R_P$.

As the following lemma shows, relaxing edges in $R_P$ can be done efficiently by exploiting the cyclic order of boundary vertices
of $R_P$ as we did above for regions with no holes.
\begin{Lem}\label{Lem:Relax}
Relaxing all edges from $V_{C_1}$ to $V_{C_2}$ in $R_P$ can be done in $O(|V_{C_1}| + |V_{C_2}|)$
time in any iteration of Bellman-Ford.
\end{Lem}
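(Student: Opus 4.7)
My plan is to show that the relaxation reduces to a column-minimum computation on a convex Monge matrix, so that SMAWK yields the claimed $O(|V_{C_1}|+|V_{C_2}|)$ time bound. Concretely, consider the $(|V_{C_1}|+1)\times(|V_{C_2}|+1)$ matrix
\[
  M_{ij} \;=\; e_{j-1}[u'_i] + d_{R_P}(u'_i,v'_j),
\]
where rows and columns are indexed by the vertices of $P_1$ and $P_2$ in the orders induced by the walks of $C_1$ and $C_2$. The relaxation of all edges from $V_{C_1}$ to $V_{C_2}$ in a single Bellman-Ford iteration is exactly: for each $v'_j$, update $e_j[v'_j]$ with the minimum entry in column $j$ of $M$. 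The identifications $u'_1=u'_{|C_1|+1}=r_1$ and $v'_1=v'_{|C_2|+1}=r_2$ contribute only an $O(1)$ overhead from merging two rows or two columns.

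The heart of the proof is to establish a Monge inequality for $D_{ij}:=d_{R_P}(u'_i,v'_j)$. Topologically, $R_P$ is the disk obtained by cutting the annular region of $R$ between $C_1$ and $C_2$ along the shortest path $P$ and duplicating $P$ into boundary copies $\overleftarrow{P}$ and $\overrightarrow{P}$; its boundary, traversed cyclically, consists of $P_1$, $\overrightarrow{P}$, a reversed traversal of $P_2$, and $\overleftarrow{P}$. With the paper's orientations ($C_1$ counter-clockwise, $C_2$ clockwise), for any $i<i'$ and $j<j'$ the four vertices $u'_i,u'_{i'},v'_{j'},v'_j$ appear in this cyclic boundary order in an interleaving pattern relative to the pairings $(u'_i,v'_{j'})$ and $(u'_{i'},v'_j)$. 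Consequently, any shortest paths $Q:u'_i\rightsquigarrow v'_{j'}$ and $Q':u'_{i'}\rightsquigarrow v'_j$ in the planar disk $R_P$ must share some vertex $x$, and splicing at $x$ produces walks $u'_i\rightsquigarrow v'_j$ and $u'_{i'}\rightsquigarrow v'_{j'}$ of total length $|Q|+|Q'|$, yielding
\[
  D_{ij} + D_{i'j'} \;\le\; D_{ij'} + D_{i'j}.
\]
Reversing the ordering of $P_2$ (re-indexing $J:=|V_{C_2}|+2-j$) flips this into the convex Monge inequality $D_{iJ}+D_{i'J'}\ge D_{iJ'}+D_{i'J}$ for $i<i'$, $J<J'$. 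Since $M$ differs from $D$ only by adding the row-dependent shift $e_{j-1}[u'_i]$, $M$ inherits convex Monge under the same reversal; SMAWK then computes all column minima of $M$ in $O(|V_{C_1}|+|V_{C_2}|)$ time, assuming $O(1)$-time access to entries supplied by the preprocessing.

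The main obstacle is making the uncrossing argument watertight in $R_P$: because $P$ sits on the boundary in two copies $\overleftarrow{P},\overrightarrow{P}$, one must verify that $Q$ and $Q'$ genuinely have interleaving endpoints on $\partial R_P$ (not both slipping to the same side of one copy of $P$), and that the forced common vertex $x$ lies strictly in the interior of $R_P$ so that splicing is valid. Both points follow from the specific orientation choices above together with the fact that, by construction, $P_1$ and $P_2$ occupy disjoint portions of $\partial R_P$ separated by $\overleftarrow{P}$ and $\overrightarrow{P}$. A secondary nuance is the index-reversal step: if the natural orientations produce the concave Monge form, we reverse one index order before invoking SMAWK; this has no effect on the running time.
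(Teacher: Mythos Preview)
Your approach is correct and essentially identical to the paper's: reduce the relaxation to finding column minima of the matrix $M_{ij}=e_{j-1}[u_i]+d_{R_P}(u_i',v_j')$, establish the Monge property via a planar crossing/splicing argument in the disk $R_P$, and apply SMAWK. Two minor points: with the paper's boundary cycle $P_1\overleftarrow{P}P_2\overrightarrow{P}$, the interleaved (hence crossing) pairs for $k<k'$, $l<l'$ are $(u_k',v_l')$ and $(u_{k'}',v_{l'}')$, which yields convex Monge directly and makes your index reversal unnecessary; and the shared vertex need not lie in the interior of $R_P$ for the splicing step to go through.
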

\begin{proof}
Let paths $P_1$ and $P_2$ in $R_P$ be defined as above.
Consider iteration $j$. Define a $|P_1|\times |P_2|$ matrix $A$ with elements $A_{kl} = e_{j-1}[u_k] + d_{R_P}(u_k',v_l')$.
Observe that relaxing all edges from $V_{C_1}$ to $V_{C_2}$ in $R_P$ is equivalent to finding all column-minima of $A$
(compare this to~\cite{SSSPPlanarNeg}).

Now, since $P_1\overleftarrow{P}P_2\overrightarrow{P}$ is a cycle, it follows easily from results of~\cite{SSSPPlanarNeg}
that for $1\leq k\leq k'\leq |P_1|$ and $1\leq l\leq l'\leq |P_2|$, $A_{kl} + A_{k'l'} \geq A_{kl'} + A_{k'l}$.
Hence $A$ is Monge (see~\cite{SSSPPlanarNeg}), so by~\cite{SMAWK},
its column-minima can be found in $O(|V_{C_1}| + |V_{C_2}|)$ time.
\end{proof}

Unfortunately, relaxing edges between boundary vertices in $R_P$ will
not suffice since shortest paths in $R$ that cross $P$ are not
represented in $R_P$.
To overcome this obstacle we identify two particular paths $P_r$ and
$P_\ell$ such that for any $u\in C_1, v \in C_2$ there exists a shortest
path in $R$ that does not cross both $P_{r}$ and $P_\ell$. Then, 
relaxing all edges between boundary vertices  once in $R_{P_r}$ and
once in $R_{P_\ell}$ suffices to compute shortest path distances in $R$. We now
formalize this idea.

Let us define what we mean when we say that path $Q = q_1,
q_2, q_3, \dots$ crosses path $P$. Let $out_0$ be the smallest index
such that $q_{out_0}$ does not belong to $P$. We recursively define
$in_i$ to be smallest index greater than  $out_{i-1}$ such that 
$q_{in_i}$ belongs to $P$, and $out_i$ to be smallest index greater than $in_i$ such that 
$q_{in_i}$ does not belong to $P$. We say that $Q$ crosses $P$ from
the right (left) with entry vertex $v_{in}$ and exit vertex $v_{out}$
if (a) $v_{in} = q_{in_i}$ and $v_{out} = q_{out_i-1}$ for some $i>0$
and (b) $q_{in_i-1}q_{in_i}$ enters $P$ from the right (left) and (c)
$q_{out_i-1}q_{out}$ emanates left (right) of $P$.

A \emph{rightmost (leftmost)} path $P$ in $T$ is a  path such that no
other path $Q$ in $T$ emanates right (left) of $P$.
Let $P_r$ and $P_\ell$ be the rightmost and leftmost root-to-leaf
simple paths in $T$, respectively; see Figure~\ref{fig:TwoCuts}(a).
\begin{figure}
\centerline{\scalebox{0.6}{\input{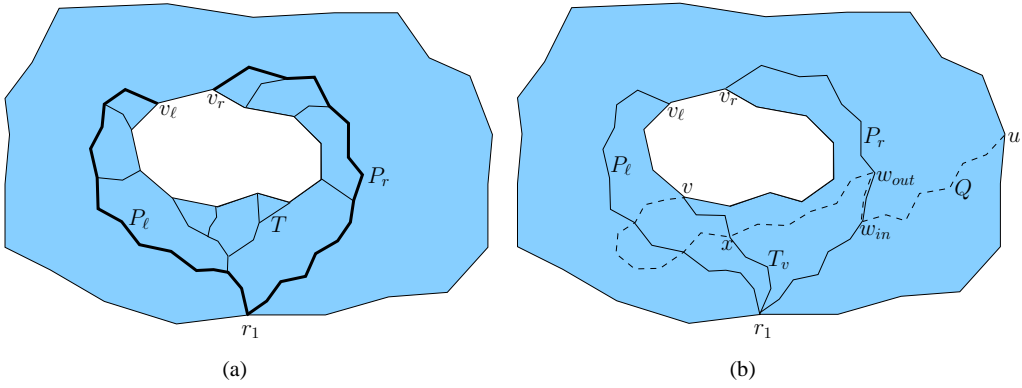}}}
\caption{(a): The rightmost root-to-leaf simple path $P_r$ and the leftmost
         root-to-leaf simple path $P_\ell$ in $T$. (b): In the proof
         of Lemma~\ref{Lem:NoTwoCross}, if $Q$ first crosses $P_r$ from
         right to left and then crosses $P_\ell$ from right to left then
         there is a $u$-to-$v$ shortest path in $R$ that does not cross
         $P_\ell$.}
\label{fig:TwoCuts}
\end{figure}
Let $v_r\in C_2$ and $v_\ell \in C_2$
denote the leaves of $P_r$ and $P_\ell$, respectively.

\begin{Lem}\label{Lem:NoTwoCross}
For any $u\in V_{C_1}$ and any $v\in V_{C_2}$, there is a simple shortest path in $R$ from $u$ to $v$ which
does not cross both $P_r$ and $P_\ell$.
\end{Lem}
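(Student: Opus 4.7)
The plan is to prove the lemma by contradiction via an exchange argument. Suppose every simple $u$-to-$v$ shortest path in $R$ crosses both $P_r$ and $P_\ell$, and let $Q$ be such a path minimizing the total number of crossings with $P_r\cup P_\ell$. I will derive a contradiction by producing a shortest $u$-to-$v$ path that crosses at most one of $P_r,P_\ell$, contradicting the choice of $Q$.

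The reduction to the main case uses symmetry. Interchanging $P_r\leftrightarrow P_\ell$ (together with ``right'' and ``left'') is a symmetry of the setup, so I may assume WLOG that $Q$ meets $P_r$ before $P_\ell$. The minimality of crossings forces the first crossing of $Q$ with $P_r$ to be from right to left: if it were from left to right, $Q$ would have to re-cross $P_r$ before reaching $P_\ell$, and the arc of $Q$ between the two $P_r$-crossings could be replaced by a subpath along $P_r$ itself---which has the same length because $P_r$ is a shortest path from $r_1$---yielding a shortest $u$-to-$v$ path with strictly fewer crossings. The same reasoning applied between the $P_r$-crossing and the first $P_\ell$-crossing forces the first crossing with $P_\ell$ to also be from right to left. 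This is exactly the configuration of Figure~\ref{fig:TwoCuts}(b). Let $x\in V_{P_r}$ be the entry vertex of the first $P_r$-crossing, and let $(w_{in},w_{out})\in V_{P_\ell}\times V_{P_\ell}$ be the entry/exit of the first $P_\ell$-crossing.

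In this configuration I would construct a shortest $u$-to-$v$ path that does not cross $P_\ell$, using the subtree $T_v$ of $T$ depicted in Figure~\ref{fig:TwoCuts}(b): the subtree of $T$ hanging off $P_\ell$ and containing $v$. Because $P_\ell$ is the leftmost root-to-$V_{C_2}$ path in $T$, every tree path in $T_v$ stays on the right of $P_\ell$ (or on $P_\ell$ itself) and therefore does not cross $P_\ell$. Combining prefixes of $P_r$ (which also avoid $P_\ell$) with tree paths inside $T_v$, I would exhibit a shortest $x$-to-$v$ path $P^*$ that does not cross $P_\ell$. The new path $Q':=Q[u,x]\cdot P^*$ then satisfies $|Q'|=d_R(u,x)+d_R(x,v)=d_R(u,v)$ because $x$ lies on the shortest path $Q$, so $Q'$ is a simple $u$-to-$v$ shortest path that avoids $P_\ell$, contradicting the choice of $Q$.

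The main obstacle is showing that this $P^*$ can be chosen to have length exactly $d_R(x,v)$, and that the splice at $w_{out}$ (where the new path meets $P_\ell$) does not accidentally produce a crossing on the wrong side. I expect both issues to be resolved by an uncrossing argument leveraging the planar Monge structure of shortest path distances between cyclically ordered vertices (the same structure underlying Lemma~\ref{Lem:Relax}) together with the fact that $P_\ell$ itself and every root-to-$v$ tree path $P_v=T[r_1,v]$ are shortest paths in $R$: any candidate $x$-to-$v$ shortest path that crosses $P_\ell$ at some pair $(a_{in},a_{out})$ can be uncrossed by rerouting along the shortest subpath of $P_\ell$ followed by a descent through $T_v$, preserving total length. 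Iterating gives the required $P^*$.
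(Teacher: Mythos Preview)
Your overall plan (contradiction, pick $Q$ minimizing crossings, case on the direction of the first crossing) matches the paper, but the execution has a genuine gap at exactly the point you flag as the ``main obstacle,'' and your handling of the left-to-right case is also off.

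The paper's key idea, which you are missing, is to work with the \emph{tree path} $T_v$ from $r_1$ to $v$ in $T$ (not a subtree hanging off $P_\ell$). Two facts make $T_v$ the right object: (i) since $T$ is a shortest-path tree rooted at $r_1$, every suffix $T_v[x,v]$ with $x\in T_v$ is automatically a shortest $x$-to-$v$ path in $R$; and (ii) since $P_\ell$ and $P_r$ are the leftmost and rightmost root-to-leaf paths, $T_v$ lies between them and crosses neither. The paper then argues that before $Q$ can cross $P_\ell$ (from the right), it must already have entered $T_v$ at some vertex $x$; splicing $Q[u,x]\,T_v[x,v]$ gives a shortest $u$-to-$v$ path that avoids $P_\ell$, with no further work needed. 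Your $x$ is instead the entry vertex on $P_r$, which in general does not lie on $T_v$, so nothing guarantees a shortest $x$-to-$v$ path that stays off $P_\ell$. The Monge/uncrossing sketch you offer does not produce such a path: uncrossing a shortest $x$-to-$v$ path against $P_\ell$ would require a shortest $a$-to-$v$ path for $a\in P_\ell$, and there is no reason $v$ is a descendant of $a$ in $T$, so the tree does not supply one.

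Your treatment of the left-to-right first crossing is also not right. You assert that $Q$ ``would have to re-cross $P_r$ before reaching $P_\ell$'' and that the intervening arc can be replaced by a subpath of $P_r$ ``which has the same length because $P_r$ is a shortest path from $r_1$.'' Neither claim is justified: in an annulus a path can reach $P_\ell$ from the right side of $P_r$ without re-crossing $P_r$; and even when there are two $P_r$-crossings, a subpath of $P_r$ between them is a shortest path only in the root-to-descendant direction, which need not match the direction $Q$ traverses. The paper handles this case differently: if $Q$ enters $P_r$ from the left, then (since $P_r$ is rightmost) all root-to-leaf paths of $T$, including $T_v$, coincide with $P_r$ up to the exit vertex $w_{out}$, so $Q$ has in fact entered $T_v$ there and $Q[u,w_{out}]\,T_v[w_{out},v]$ already avoids $P_r$.

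In short, replace your $x\in P_r$ by the first point where $Q$ meets the path $T_v$, and both cases close immediately without any Monge or uncrossing argument.
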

\begin{proof}
Suppose for the sake of contradiction that there are vertices
$u\in V_{C_1}$ and $v\in V_{C_2}$ such that any simple shortest path
in $R$ from $u$ to $v$ crosses both $P_r$ and $P_\ell$.
Let $Q$ be a simple $u$-to-$v$ shortest path in $R$ such that the sum
of the number of times it crosses $P_r$ and the number of times it crosses
$P_\ell$ is minimal. 

Assume that $Q$ crosses $P_r$ first. The case where $Q$ crosses $P_\ell$
first is symmetric.
Let $w_{in}$ and $w_{out}$ be the entry and exit vertices
of the first crossing, see Figure~\ref{fig:TwoCuts}(b).
If $Q$ first crosses $P_r$ from left to right, then observe that it must cross $P_\ell$
at the same vertices. In fact, it must be that all root-to-leaf paths in $T$ coincide
until $w_{out}$ and that $Q$ crosses all of them.
In particular, $Q$ crosses the root-to-$v$  path in $T$, which we
denote by $T_v$. 
Since $T_v$ does not cross $P_r$, the path $Q[u,w_{out}]T_v[w_{out},v]$ is a shortest
path from $u$ to $v$ in $R$ that does not cross $P_r$. But this contradicts our
assumption above.

We conclude that $Q$ first crosses $P_r$ from right to left. Consider
the path $S = Q[u,w_{out}]P_r[w_{out},v_r]$. 
We claim that $Q$ does not cross $S$. To see this,
assume the contrary and let
$w'$ denote the exit point corresponding to the crossing. Since $Q$ is simple, $w' \notin Q[u,w_{out}]$.
So $w' \in P_r[w_{out},v_r]$, but
then $Q[u,w_{out}]P_r[w_{out},w']Q[w',v]$ is a shortest path from
$u$ to $v$ in $R$ that crosses $P_r$ and $P_\ell$ fewer times than $Q$. But this
contradicts the minimality of $Q$.

Since $Q$ first crosses $P_r$ from right to left and never crosses $S$,
its first crossing with $P_\ell$ must be right-to-left as well, see
Figure~\ref{fig:TwoCuts}(b). This
implies that $Q$ enters all root-to-leaf paths in $T$ 
before (not strictly before) it enters $P_\ell$. In particular, $Q$
enters $T_v$. Let $x$ be the entry vertex. Then $Q[u,x]T_v[x,v]$
is a $u$-to-$v$ shortest path in $R$ that does not
cross $P_\ell$, contradicting our assumption.
\end{proof}

\paragraph{The algorithm:}
We can now describe our Bellman-Ford algorithm to relax all edges from vertices of $C_1$ to vertices of $C_2$.
Pseudocode is shown in Figure~\ref{fig:Pseudocode2}.

Assume that $R_{P_l}$ and $R_{P_r}$ and distances between pairs of boundary vertices in these graphs have been precomputed.

In each iteration $j$, we relax edges from vertices of $V_{C_1}$ to all $v\in V_{C_2}$ in
$R_{P_\ell}$ and in $R_{P_r}$ (lines $9$ and $10$). Lemma~\ref{Lem:NoTwoCross} implies that this corresponds to relaxing
all edges in $R$ from vertices of $V_{C_1}$ to vertices of $V_{C_2}$.  By the results
in Section~\ref{subsec:SSIntregBoundaryDists}, this suffices to show the correctness of the algorithm.

Lemma~\ref{Lem:Relax} shows that lines $9,10$ can each be implemented to run in $O(|V_{C_1}| + |V_{C_2}|)$
time. Thus, each iteration of lines $6$--$10$ takes
$O((|V_{C_1}| + |V_{C_2}|)\alpha(|V_{C_1}| + |V_{C_2}|))$ time, as requested.

It remains to show that $R_{P_r}$ and $R_{P_\ell}$ and distances between boundary vertices in these graphs can be precomputed in
$O(|V_R|\log|V_R|)$ time 

\begin{figure}
\begin{tabbing}
d\=ddd\=\quad\=\quad\=\quad\=\quad\=\quad\=\quad\=\quad\=\quad\=\quad\=\quad\=\quad\=\kill
\>1.  \>initialize vector $e_j[v]$ for $j = 0,\ldots, b$ and $v\in B$\\
\>2.  \>$e_j[v] := \infty$ for all $v\in B$ and $j = 0,\ldots, b$\\
\>3.  \>$e_0[r] := 0$\\
\>4.  \> \textbf{for} $j = 1,\ldots,b$\\
\>5.  \>\> \textbf{for} each region $R\in\mathcal R$\\
\>6.  \>\>\> \textbf{for} each pair of cycles, $C_1$ and $C_2$, defining the boundary of $R$\\
\>7.  \>\>\>\> \textbf{if} $C_1 = C_2$, relax edges from $C_1$ to $C_2$ as in~\cite{SSSPPlanarNeg}\\
\>8.  \>\>\>\> \textbf{else} (assume $C_1$ is external and that $d_{R_{P_r}}$ and $d_{R_{P_\ell}}$ have been precomputed)\\
\>9.  \>\>\>\>\> $e_j[v] := \min\{e_j[v],\min_{w\in V_{C_1}}\{e_{j-1}[w] + d_{R_{P_r}}(w,v)\}\}$
                 for all $v\in V_{C_2}$\\
\>10. \>\>\>\>\> $e_j[v] := \min\{e_j[v],\min_{w\in V_{C_1}}\{e_{j-1}[w] + d_{R_{P_\ell}}(w,v)\}\}$
                 for all $v\in V_{C_2}$\\
\>11. \> $D[v] := e_b[v]$ for all $v\in B$
\end{tabbing}
\caption{Pseudocode for the Bellman-Ford variant that handles regions with holes.}\label{fig:Pseudocode2}
\end{figure}

Shortest path tree $T$ in $R$ with source $r_1$ can be found in $O(|V_R|\log|V_R|)$ time with Dijkstra using the
feasible price function $\phi$ obtained from the recursively computed
distances in $R$. Given $T$, we can find its rightmost
path in $O(|V_R|)$ time by starting at the root $r_1$. When entering a
vertex $v$ using the edge $uv$, leave that vertex on the edge that
comes after $vu$ in counterclockwise order. Computing
$R_{P_r}$ given $P_r$ also takes
$O(|V_R|)$ time. We can next apply Klein's algorithm~\cite{MultiSrc} to compute distances between all pairs of boundary vertices
in $R_{P_r}$ in $O(|V_R|\log|V_R|)$ time (here, we use the non-negative edge lengths in $R$ defined by the reduced cost function
induced by $\phi$). We similarly compute $P_\ell$ and pairwise distances
between boundary vertices in $R_{P_\ell}$.

We can now state our result.
\begin{theorem}\label{Thm:MainRes}
Given a planar directed graph $G$ with real edge lengths and no negative cycles and given a source vertex $s$, we can find
SSSP distances in $G$ with source $s$ in $O(n\log^2n/\log\log n)$ time and linear space.
\end{theorem}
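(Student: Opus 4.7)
The plan is to assemble the tools developed in Section~\ref{sec:ImprovedAlgo} into a recursive procedure and bound its running time and space. At the top level, I would compute an $n/p$-division of $G$ in $O(n\log n)$ time using the construction of Fakcharoenphol and Rao, with $p = \log n/\alpha(n)$. This yields $O(p)$ regions, each with at most $n/p$ vertices, $O(\sqrt{n/p})$ boundary vertices, and a boundary contained in $O(1)$ cycles (holes). For each region $R_i$, I would pick an arbitrary boundary vertex $r_i$ and recursively compute SSSP distances in $R_i$ with source $r_i$. After recursion, I would execute the four post-recursion stages of Section~\ref{sec:ImprovedAlgo} to produce SSSP distances in $G$ from $s$.

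For the four stages I would invoke: (i) intra-region boundary distances computed by the multiple-source shortest path algorithm of~\cite{MultiSrc}, applied once per external face or hole of each region after the standard transformation of Section~\ref{subsec:Holes}, taking $O(|V_R|\log|V_R|)$ per region; (ii) single-source inter-region boundary distances computed by the Bellman-Ford variant of Figure~\ref{fig:Pseudocode2}, whose correctness for regions with holes follows from Lemma~\ref{Lem:BellmanFord1}, Lemma~\ref{Lem:BellmanFord2}, and Lemma~\ref{Lem:NoTwoCross}, and whose per-iteration cost of $O((|V_{C_1}|+|V_{C_2}|)\alpha(\cdot))$ follows from Lemma~\ref{Lem:Relax}, summing to $O(b^2\alpha(n)) = O(np\alpha(n))$ over all $b = O(\sqrt{np})$ iterations; (iii) single-source inter-region distances, computed in each region $R$ by running Dijkstra on the augmented graph $R'$ with the feasible price function $\phi$ of Lemma~\ref{Lem:Price}, costing $O(|V_R|\log|V_R|)$ per region; and (iv) rerooting from $r$ to $s$, which is identical to the final stage of~\cite{SSSPPlanarNeg} and runs in $O(n\log n)$. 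Together, the four post-recursion stages cost $O(n\log n + np\alpha(n))$.

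For the running time, I would observe that each recursion level shrinks region sizes by a factor of at least $p$, so the recursion depth is $O(\log n/\log p)$, and the total work across a single level sums to $O(n\log n + np\alpha(n))$ since regions at a level partition the vertex set up to boundary overlap. The overall running time is therefore
\[
  O\!\left(\frac{\log n}{\log p}\,(n\log n + np\alpha(n))\right).
\]
Setting $p = \log n/\alpha(n)$ balances the two inner terms and gives $\log p = \Theta(\log\log n)$, producing the claimed bound of $O(n\log^2 n/\log\log n)$. For the $O(n)$ space bound, I would perform the recursion depth-first, so that only one root-to-leaf chain of region workspaces is live at any time; the boundary distance tables and the auxiliary graphs $R_{P_r}, R_{P_\ell}$ for a single region have size linear in that region, and the workspace freed upon returning from a child can be reused for its siblings, so the total space is $O(n)$ as in~\cite{SSSPPlanarNeg}.

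The main obstacle in this plan is entirely localized in Sections~\ref{subsec:SSIntregBoundaryDists} and~\ref{subsec:Holes}: ensuring that each Bellman-Ford iteration costs only $O(\sqrt{np}\,\alpha(n))$ even though regions have holes, so that the overall per-level cost matches $O(n\log n)$ for the chosen $p$. This is handled by the two-path-cut construction with $P_r$ and $P_\ell$, whose sufficiency is Lemma~\ref{Lem:NoTwoCross} and whose Monge structure yields the fast column-minima computation of Lemma~\ref{Lem:Relax}; once these are in place, the theorem follows by combining the pieces as above.
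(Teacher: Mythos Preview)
Your proposal is correct and follows the same approach as the paper: recurse on an $n/p$-division with $p=\log n/\alpha(n)$, execute the four post-recursion stages in $O(n\log n+np\alpha(n))$ per level, and multiply by the $O(\log n/\log p)$ recursion depth. The only noticeable difference is in the space argument: the paper gives a concrete per-stage accounting at one level (the $O(p)$ boundary-distance tables total $O(p(\sqrt{n/p})^2)=O(n)$, the Bellman-Ford tables are $O(b)$ with only two iterations kept, and all recursively computed distance vectors together occupy $O(p\cdot n/p)=O(n)$), whereas you frame it via depth-first recursion and workspace reuse; both reach $O(n)$, but note that the boundary-distance tables for \emph{all} regions at a level must be held simultaneously during the Bellman-Ford stage, so the key fact is that their total size is linear, not that a single region's table is.
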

\begin{proof}
We gave the bound on running time above. To bound the space, first note that finding an $n/p$-division of $G$ using the
algorithm of~\cite{Fakcharoenphol} requires $O(n)$ space. Klein's algorithm~\cite{MultiSrc} and Dijkstra also has linear
space requirement. The recursively computed distances take up a total of $O(p\frac n p) = O(n)$ space. In the intra-region
boundary distances stage, the total memory spent on storing distances is $O(p(\sqrt{n/p})^2) = O(n)$.

In the single-source inter-region boundary distances stage, we need to bound the space for our Bellman-Ford variant.
The size of each table is $O(b) = O(n)$. Since we only need to keep tables from the current and previous
iteration in memory, Bellman-Ford uses $O(n)$ space.
It is easy to see that the last two stages use $O(n)$ space. Hence the entire algorithm has linear space requirement.
\end{proof}

\section{Concluding Remarks}\label{sec:ConclRem}
We gave a linear space algorithm for single-source shortest path distances in a planar directed graph with arbitrary real
edge lengths and no negative cycles. Running time is $O(n\log^2n/\log\log n)$, an improvement of a previous bound
by a factor of $\log\log n$. As corollaries, bipartite planar perfect matching, feasible flow, and feasible
circulation in planar graphs can be solved in $O(n\log^2n/\log\log n)$ time.

It would be interesting to consider other types of graphs. Results from~\cite{BoundedGenus} seem to imply that our
algorithm generalizes to bounded genus graphs.

Finding the true complexity of the problem remains open since there is still a gap between our upper bound and the
linear lower bound. Is $O(n\log n)$ time achievable?


\newpage
\section*{Appendix}
The proofs left out or only sketched in the main paper were very similar to those in~\cite{SSSPPlanarNeg}. For completeness, we
give them here.

\subsection*{Proof of Lemma~\ref{Lem:BellmanFord2}}
We need to show that after iteration $j$ of the algorithm in Figure~\ref{fig:Pseudocode}, $e_j[v]$ is the length of a
shortest path in $G$ from $r$ to $v$ that can be decomposed into at most $j$ subpaths $P = P_1P_2P_3\ldots P_j$,
where the endpoints of each subpath $P_i$ are boundary vertices and $P_i$ is a shortest path in a region of $\mathcal R$.

The proof is by induction on $j\geq 0$ (and very similar to the proof of Lemma 4.2 in~\cite{SSSPPlanarNeg}). When
$j = 0$, $e_j[r] = 0$ and $e_j[v] = \infty$ for all $v\in B\setminus\{r\}$ after line $3$ and the base case holds.

Suppose $j > 0$ and that the lemma holds for $j - 1$. Consider a shortest path $P$ in $G$ from $r$ to a $v\in B$ that can be
decomposed into subpaths $P_1P_2P_3\ldots P_j$, where the endpoints of each subpath $P_i$ are boundary vertices and $P_i$
is a shortest path in a region of $\mathcal R$. We need to show that after iteration $j$, $e_j[v]$ is the length of $P$.

Subpath $P' = P_1P_2\ldots,P_{j-1}$ is a shortest path in $G$ from $r$ to a $w\in B$ which can be decomposed into at most
$j-1$ subpaths as above. Furthermore, there is a region $R\in\mathcal R$
such that $v$ and $w$ are boundary vertices of $R$ and $P_j$ is a shortest path in $R$ from $w$ to $v$.

At some point in iteration $j$, we reach line $7$ with $C$ being the cycle defining the boundary
of $R$ and $v,w\in V_C$. By the induction hypothesis, $e_{j-1}[w]$ is the length of $P'$. Since
$e_j[v]$ is set to a value of at most $e_{j-1}[w] + d_R(w,v)$, $e_j[v]$ is at most the length of $P$.

Let us show the other inequality. For any $w\in B$, $e_{j-1}[w]$ is clearly the length of some path in $G$ from $r$ to $w$
that can be decomposed into at most $j-1$ subpaths, where each subpath is a shortest path in a region between two boundary
vertices of that region. Hence, when $e_j[v]$ is updated in line $7$, its value is the length of some path in $G$ from $r$ to
$v$ that can be decomposed into at most $j$ such subpaths. This shows that $e_j[v]$ is at least the length of $P$, completing
the proof.

\subsection*{Proof of Lemma~\ref{Lem:Relax}}
We need to show that relaxing all edges from $V_{C_1}$ to $V_{C_2}$ in $R_P$ can be done in
$O(|V_{C_1}| + |V_{C_2}|)$ time in any iteration of Bellman-Ford.

We only sketched a proof in the main paper. Let paths $P_1$ and $P_2$ and $|P_1|\times|P_2|$ matrix $A$ be defined as in the
proof sketch. We need to show that the column-minima of $A$ can be found in
$O(|V_{C_1}| + |V_{C_2}|)$ time. As shown in the main paper, this amounts to showing that 
for $1\leq k < k'\leq |P_1|$ and $1\leq l < l'\leq f(i)$, we have $A_{kl} + A_{k'l'} \geq A_{kl'} + A_{k'l}$.

Since $P_1\overleftarrow{P}P_2\overrightarrow{P}$ is
a cycle and since $R_P$ is planar, any pair of paths in $R_P$ from $u_{k}'$ to $v_{l}'$ and
from $u_{k'}'$ to $v_{l'}'$ must intersect in some $w\in V_{R_P}$, see Figure~\ref{fig:Monge}.
\begin{figure}
\centerline{\scalebox{0.6}{\input{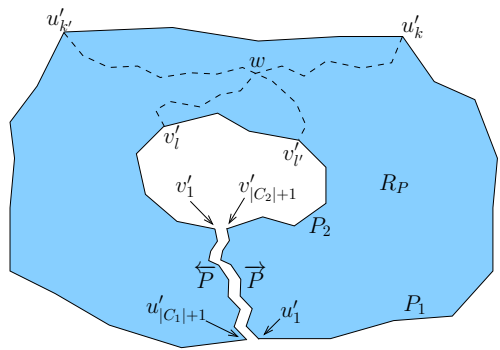}}}
\caption{The situation in the proof of Lemma~\ref{Lem:Relax}. Any pair of paths in $R_P$ from $u_{k}'$ to $v_{l}'$ and
from $u_{k'}'$ to $v_{l'}'$ must intersect in some $w\in V_{R_P}$.}
\label{fig:Monge}
\end{figure}
Let $b_k = e_{j-1}[u_{k}]$ and $b_k' = e_{j-1}[u_{k'}]$ (recall that we identified $u_k$ with $u_k'$ and
$u_{k'}$ with $u_{k'}'$). Then
\begin{align*}
A_{kl} + A_{k'l'} &    = (b_k + d_{R_P}(u_k',w) + d_{R_P}(w,v_l')) + (b_{k'} + d_{R_P}(u_{k'}',w) + d_{R_P}(w,v_{l'}'))\\
                  &    = (b_k + d_{R_P}(u_k',w) + d_{R_P}(w,v_{l'}')) + (b_{k'} + d_{R_P}(u_{k'}',w) + d_{R_P}(w,v_l'))\\
                  & \geq (b_k + d_{R_P}(u_k',v_{l'}')) + (b_{k'} + d_{R_P}(u_{k'}',v_l'))\\
                  &    = A_{kl'} + A_{k'l},
\end{align*}
as requested.
\end{document}